\theoremstyle{plain}
\newtheorem{theorem}{Theorem}[section]
\newtheorem{proposition}[theorem]{Proposition}
\newtheorem{lemma}[theorem]{Lemma}
\theoremstyle{definition}
\newtheorem{assumption}[theorem]{Assumption}
\numberwithin{equation}{section}
\newcommand{\abs}[1]{\lvert{#1}\rvert}
\newcommand{\ket}[1]{\lvert{#1}\rangle}
\newcommand{\bra}[1]{\langle{#1}\rvert} 
\newcommand{\ip}[2]{\langle{#1},{#2}\rangle}
\DeclareMathOperator{\im}{Im}
\DeclareMathOperator{\re}{Re} 
\DeclareMathOperator{\Ker}{Ker}
\DeclareMathOperator{\Ai}{Ai}
\DeclareMathOperator{\supp}{supp}
\newcommand{\bR}{\mathbf{R}}
\newcommand{\bC}{\mathbf{C}}
\newcommand{\cF}{\mathcal{F}}
\newcommand{\cH}{\mathcal{H}}
\newcommand{\cG}{\mathcal{G}}
\newcommand{\cB}{\mathcal{B}}
\newcommand{\cK}{\mathcal{K}}
\newcommand{\cR}{\mathcal{R}}
\newcommand{\cS}{\mathcal{S}}
\newcommand{\sfr}{\mathsf{r}}
\newcommand{\sfs}{\mathsf{s}}
\newcommand{\sfu}{\mathsf{u}}
\newcommand{\sfv}{\mathsf{v}}
\newcommand{\eps}{\varepsilon}
\begin{document}
\title{Instability of resonances under Stark perturbations}
\author{Arne Jensen\footnote{Department of Mathematical Sciences, Aalborg University, Skjernvej 4A, DK-9220 Aalborg \O{}, Denmark. E-mail: \texttt{matarne@math.aau.dk}}
\and
Kenji Yajima\footnote{Department of Mathematics, Gakushuin University, 1-5-1 Mejiro, Toshima-ku, Tokyo 171-8588, Japan. E-mail: \texttt{kenji.yajima@gakushuin.ac.jp}}}
\date{}

\maketitle
\begin{abstract}
Let $H^{\eps}=-\frac{d^2}{dx^2}+\eps x +V$, $\eps\geq0$, on $L^2(\bR)$. Let 
$V=\sum_{k=1}^Nc_k\ket{\psi_k}\bra{\psi_k}$ be a rank $N$ operator, where the $\psi_k\in L^2(\bR)$ are real, compactly supported, and even. Resonances are defined using analytic scattering theory. The main result is that if $\zeta_n$, $\im\zeta_n<0$, are resonances of $H^{\eps_n}$ for a sequence $\eps_n\downarrow0$ as $n\to\infty$ and $\zeta_n\to\zeta_0$ as $n\to\infty$, $\im\zeta_0<0$, then $\zeta_0$ is \emph{not} a resonance of $H^0$.
\end{abstract}


\section{Introduction}
We consider a family of Hamiltonians on $\cH=L^2(\bR)$ given as
\begin{equation}\label{model}
H_0^{\eps}=-\frac{d^2}{dx^2}+\eps x, \quad H^{\eps}=H_0^{\eps}+V, 
\quad \eps\in[0,\infty),
\end{equation}
where $V$ is a bounded self-adjoint operator. Under suitable assumptions on $V$ one can define resonances of $H^{\eps}$ as poles of matrix elements $\ip{u}{(H^{\eps}-\zeta)^{-1}v}$ continued analytically from the upper halfplane across $(0,\infty)$ to the lower halfplane. Consider the following situation. Suppose that 
$H^0$ has a resonance $\zeta_0$ in the lower halfplane close to the positive real axis. Suppose that
there exists a sequence $\eps_n\downarrow0$ for $n\to\infty$ such that each $H^{\eps_n}$ has resonance $\zeta_n$ in the lower half-plane. We then ask: Is it possible that $\zeta_n\to\zeta_0$ as $n\to\infty$? The main result here is that under suitable conditions on $V$ the answer is \textbf{no}. One example is that $V$ is a rank one operator $V=c\ket{\psi}\bra{\psi}$ such that $\psi\in L^2(\bR)$ has compact support and is a real-valued function.

The instability of pre-existing resonances was first considered in \cite{HR}. They obtained results for two different models, a Friedrich model, and a model of the form 
\eqref{model} with $V$ a rank one perturbation. An explicit construction of a dilation analytic rank one perturbation leading to a resonance of $H^0$ close to the real axis is given.  Then as $\eps\downarrow0$ all resonances of $H^{\eps}$ are converging to the real axis, i.e. do not converge to a pre-existing resonance, see~\cite[Theorem 1.13]{HR}. Their proofs rely of detailed studies of the resolvent behavior.

We obtain results for a class of perturbations different from the one in \cite{HR}. We use techniques from abstract analytic scattering theory. Stationary scattering theory for Stark Hamiltonians was first obtained in~\cite{Y79} and results on analytic scattering theory for Stark Hamiltonians was obtained in~\cite{Y81}. An abstract analytic scattering theory was given in~\cite{AJ}. In particular the identity between poles of the analytically continued matrix elements of the resolvent and poles of the analytically continued scattering matrix was shown. This result was obtained in~\cite{Y81} for Stark Hamiltonians.

Our main results are stated in Theorem~\ref{rank-one} for rank one perturbations, and in Theorem~\ref{rankN} for a rank $N$ perturbation under the assumptions that it is given by compactly supported, real, and even functions. The proofs rely on the connection between poles of analytically continued matrix elements of the resolvent and poles of the analytically continued scattering matrix, and a detailed analysis of asymptotics of the Airy function.

\section{Notation and framework}\label{sect2}
We consider the following families of Hamiltonians on $\cH=L^2(\bR)$:
\begin{equation}
H_0^{\eps}=-\frac{d^2}{dx^2}+\eps x, \quad H^{\eps}=H_0^{\eps}+V, 
\quad \eps\in[0,\infty).
\end{equation}

The perturbation $V$ is assumed to be a bounded self-adjoint operator on $\cH$ which is factored as $V=B^{\ast}A=A^{\ast}B$. Here $A,B\colon \cH\to\cK$ are bounded operators and $\cK$ is an auxiliary Hilbert space. Further assumptions on $A$ and $B$ will be stated later.

We start by recalling a variant of the notation used in the Kuroda approach to scattering theory~\cite{Kuroda}. We define
\begin{equation}
R_0^{\eps}(\zeta)=(H_0^{\eps}-\zeta)^{-1},\quad R^{\eps}(\zeta)=(H^{\eps}-\zeta)^{-1},
\quad \im\zeta\neq0.
\end{equation}
We also define
\begin{equation}\label{def-G}
Q_0^{\eps}(\zeta)=BR_0^{\eps}(\zeta)A^{\ast},\quad G_0^{\eps}(\zeta)=1+Q_0^{\eps}(\zeta).
\end{equation}

We have that $G_0^{\eps}(\zeta)$ is invertible for $\im\zeta\neq0$. The second resolvent equation can be written as
\begin{equation}\label{resolvent-eq}
R^{\eps}(\zeta)=R_0^{\eps}(\zeta)-R_0^{\eps}(\zeta)A^{\ast}G_0^{\eps}(\zeta)^{-1}B
R_0^{\eps}(\zeta).
\end{equation}

We need the spectral representation for $H_0^{\eps}$. Since the spectral multiplicity is $2$ for $\eps=0$ and $1$ for $\eps>0$, we split into these two cases. For $\eps=0$ the spectral representation is $F^0\colon L^2(\bR)\to L^2((0,\infty);\bC^2)$ defined as
\begin{equation}
(F^0u)(\lambda)=T^0(\lambda)u=\begin{bmatrix}
T_0^+(\lambda)u
\\
T_0^-(\lambda)u
\end{bmatrix}=\frac{1}{\sqrt{2}\lambda^{1/4}}\begin{bmatrix}
\widehat{u}(\sqrt{\lambda})\\
\widehat{u}(-\sqrt{\lambda})
\end{bmatrix},\quad \lambda>0.
\end{equation}
The operator of multiplication by $\lambda$ on $L^2((0,\infty);\bC^2)$ is denoted by $M_{\lambda}$. Then  $F^0$ is unitary and we have $F^0H_0^0(F^0)^{\ast}=M_{\lambda}$.

For the case $\eps>0$ we define
\begin{align}
(V(\eps)u)(x)&=\frac{1}{\sqrt{\eps}}u(\frac{1}{\eps}x),\\
U(\eps)&=V(\eps)
\cF^{\ast}M_{\exp(-ip^3/(3\eps))}\cF.\label{U-def}
\end{align}
Then $F^{\eps}$  given by $(F^{\eps}u)(\lambda)=(U(\eps)u)(\lambda)$ is unitary, and we have $F^{\eps}H_0^{\eps}(F^{\eps})^{\ast}=M_{\lambda}$, see \cite{Y79}.

The trace operators used in the Kuroda approach are  defined as follows for $v\in\cK$
\begin{align}
T^{\eps}(\lambda;A)v&=(F^{\eps}A^{\ast}v)(\lambda),\label{Teps-def}\\
T^{\eps}(\lambda;B)v&=(F^{\eps}B^{\ast}v)(\lambda).
\end{align}

We will assume that there exists $\Omega\subseteq\bC$ satisfying $\overline{\Omega}=\Omega$ such that $\Omega\cap\bR=I$ is an open interval satisfying $I\subseteq(0,\infty)$. We assume that $T^{\eps}(\lambda;A)$ and $T^{\eps}(\lambda;B)$ have analytic extensions to $\Omega$ with values in $\cB(\cK,\bC^2)$ for $\eps=0$ and in 
$\cB(\cK,\bC)$ for $\eps>0$.

Let $\bC^{\pm}=\{\zeta\,|\,\pm\im\zeta>0\}$. We define
\begin{equation}
\Omega^{\pm}=\{\zeta\in\Omega\,|\,\pm\im\zeta>0\}.
\end{equation}
and then
\begin{equation}
Q^{\eps}_{0,\pm}(\zeta)=Q_0^{\eps}(\zeta),\quad \zeta\in\bC^{\pm}.
\end{equation}
We recall
\begin{proposition}[{\cite[Proposition 3.1]{AJ}}]\label{prop21}
We have the following results:
\begin{enumerate}
\item
$Q^{\eps}_{0,+}(\zeta)$ has an analytic continuation from $\bC^+$ to $\bC^+\cup I\cup\Omega^-$, which we denote by $\widetilde{Q}^{\eps}_{0,+}(\zeta)$.
\item
$Q^{\eps}_{0,-}(\zeta)$ has an analytic continuation from $\bC^-$ to $\bC^-\cup I\cup\Omega^+$, which we denote by $\widetilde{Q}^{\eps}_{0,-}(\zeta)$.
\item
We have  for $\zeta\in\Omega$
\begin{equation}
\widetilde{Q}^{\eps}_{0,+}(\zeta)-\widetilde{Q}^{\eps}_{0,-}(\zeta)
=2\pi i T^{\eps}(\overline{\zeta};B)^{\ast}T^{\eps}(\zeta;A).
\end{equation}
\end{enumerate}
\end{proposition}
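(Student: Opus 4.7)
The plan is to represent $Q_0^{\eps}(\zeta)$ as an operator-valued Cauchy integral via the spectral representation of $H_0^{\eps}$, build the two analytic continuations by contour deformation, and read off the jump from the residue theorem. First I would combine $F^{\eps}H_0^{\eps}(F^{\eps})^{\ast}=M_{\lambda}$ with the definitions \eqref{Teps-def} of the trace operators to obtain, for $\im\zeta\neq 0$,
\[
Q_0^{\eps}(\zeta)=\int_0^{\infty}\frac{T^{\eps}(\lambda;B)^{\ast}T^{\eps}(\lambda;A)}{\lambda-\zeta}\,d\lambda,
\]
first as a sesquilinear-form identity on $\cK\times\cK$ and then as an equality in $\cB(\cK)$, using the boundedness of $BR_0^{\eps}(\zeta)A^{\ast}$. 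Setting $N^{\eps}(\lambda)=T^{\eps}(\lambda;B)^{\ast}T^{\eps}(\lambda;A)$, the assignment $\zeta\mapsto T^{\eps}(\overline{\zeta};B)^{\ast}T^{\eps}(\zeta;A)$ is a holomorphic extension of $N^{\eps}$ from $I$ to $\Omega$: the right factor is holomorphic by hypothesis, and the left factor is holomorphic because conjugating the argument of the holomorphic map $T^{\eps}(\cdot;B)$ makes it antiholomorphic while the adjoint restores holomorphy, using $\overline{\Omega}=\Omega$. I use the same symbol $N^{\eps}(\zeta)$ for the extension.

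Next, to construct $\widetilde{Q}^{\eps}_{0,+}$, I would deform the portion of the integration contour lying over $I$ downward into an arc $C_-\subset\Omega^-$ joining the same endpoints. By Cauchy's theorem, the deformed integral agrees with $Q^{\eps}_{0,+}$ on $\bC^+$, and as a function of $\zeta$ it is holomorphic on the complement of $C_-$, in particular on $\bC^+\cup I$ together with the part of $\Omega^-$ lying above $C_-$. Sweeping the arc $C_-$ through $\Omega^-$ and invoking uniqueness of analytic continuation on overlaps yields the extension to all of $\bC^+\cup I\cup\Omega^-$. The extension $\widetilde{Q}^{\eps}_{0,-}$ is constructed symmetrically with an arc $C_+\subset\Omega^+$.

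Finally, for the jump, I fix $\zeta\in\Omega$ and choose $C_{\pm}$ so that $\zeta$ lies in the lens-shaped region bounded above by $C_+$ and below by $C_-$. Subtracting the two representations leaves precisely the integral of $N^{\eps}(\lambda)/(\lambda-\zeta)$ over the closed loop $C_-\cup(-C_+)$, traversed counterclockwise around $\zeta$, and the residue theorem then gives $\widetilde{Q}^{\eps}_{0,+}(\zeta)-\widetilde{Q}^{\eps}_{0,-}(\zeta)=2\pi i\,N^{\eps}(\zeta)=2\pi i\,T^{\eps}(\overline{\zeta};B)^{\ast}T^{\eps}(\zeta;A)$. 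The most delicate step I anticipate is upgrading the Cauchy integral from a sesquilinear-form identity to a norm-convergent operator-valued integral and maintaining this convergence under the contour deformation; this rests on sufficient decay of $\|T^{\eps}(\lambda;A)\|$ and $\|T^{\eps}(\lambda;B)\|$ as $\lambda\to\infty$, which is implicit in the boundedness of $A$ and $B$ together with the factorization $V=B^{\ast}A$.
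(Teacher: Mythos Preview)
The paper does not supply its own proof of this proposition; it is quoted verbatim from \cite[Proposition 3.1]{AJ} and used as a black box. So there is nothing in the present paper to compare your argument against. Your outline is the standard Cauchy-integral/contour-deformation proof one finds behind such statements, and the overall strategy is sound, but two points deserve correction.

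First, a minor slip: for $\eps>0$ the spectrum of $H_0^{\eps}=-d^2/dx^2+\eps x$ is all of $\bR$, not $[0,\infty)$, so your spectral Cauchy integral must run over $(-\infty,\infty)$ rather than $(0,\infty)$. This does not affect the rest of the argument, since $I\subset(0,\infty)$ and only a compact portion of the contour near $I$ needs to be moved.

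Second, and more substantively, your last paragraph misidentifies the remedy for the convergence issue. Boundedness of $A$ and $B$ gives only $\int\|T^{\eps}(\lambda;A)v\|^2\,d\lambda=\|A^{\ast}v\|^2<\infty$, which yields no pointwise decay of $\|T^{\eps}(\lambda;A)\|$ as $\lambda\to\infty$; the operator-valued Cauchy integral need not converge absolutely in norm. The correct fix is not to seek decay at infinity but to localize: split the spectral integral as the part over a compact interval $J\supset\overline{I}$ plus the remainder. The remainder, still written as $B\,\bone_{\bR\setminus J}(H_0^{\eps})R_0^{\eps}(\zeta)A^{\ast}$, is manifestly holomorphic in $\zeta$ on a full complex neighborhood of $I$ (and indeed on $\bC\setminus\overline{\bR\setminus J}$), so it needs no continuation. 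Only the integral over $J$ must be deformed into $\Omega^{\mp}$, and there the integrand $N^{\eps}(\lambda)/(\lambda-\zeta)$ is continuous (indeed holomorphic) on a compact arc, so convergence is automatic. With this localization your residue computation for part (3) goes through unchanged.
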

We use the notation
\begin{equation}
\widetilde{G}_{0,\pm}^{\eps}(\zeta)=1+\widetilde{Q}^{\eps}_{0,\pm}(\zeta), 
\quad \zeta\in\Omega.
\end{equation}
We impose assumptions on $A$ and $B$ such that $\widetilde{Q}^{\eps}_{0,\pm}(\zeta)$ is compact for $\zeta\in\Omega$. We can then use the analytic Fredholm theorem to obtain the following result.

\begin{proposition}[{\cite[Proposition 3.2]{AJ}}]
There exist discrete sets $e_{\pm}^{\eps}\subset I$ with the end points of $I$ as the only possible points of accumulation, and discrete sets $r_{\pm}^{\eps}\subset\Omega^{\mp}$ with $\partial\Omega^{\mp}\setminus I$ as the only possible points of accumulation. Then $\widetilde{G}_{0,\pm}^{\eps}(\zeta)$ are invertible for $\zeta\in(\bC^{\pm}\cup
\Omega^{\mp}\cup I)\setminus(e_{\pm}^{\eps}\cup r_{\pm}^{\eps})$. The continued inverse    $(\widetilde{G}_{0,\pm}^{\eps}(\zeta))^{-1}$ has poles contained in the set $e_{\pm}^{\eps}\cup r_{\pm}^{\eps}$.
\end{proposition}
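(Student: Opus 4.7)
The statement is a direct application of the analytic Fredholm theorem (in an operator–valued form) to the function $\widetilde G_{0,\pm}^{\eps}(\zeta)=1+\widetilde Q_{0,\pm}^{\eps}(\zeta)$ on the open connected set $\Lambda_\pm:=\bC^{\pm}\cup I\cup\Omega^{\mp}$. The plan is therefore to verify the hypotheses of that theorem on $\Lambda_\pm$ and then translate its conclusion into the claimed geometric description of the exceptional set.

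First I would record the two analytic/compactness inputs. By Proposition~\ref{prop21} the operator-valued function $\widetilde Q_{0,\pm}^{\eps}$ is analytic on $\Lambda_\pm$. The compactness of $\widetilde Q_{0,\pm}^{\eps}(\zeta)$ on $\Lambda_\pm$ is one of the standing assumptions the authors have imposed on $A,B$ (the sentence ``We impose assumptions on $A$ and $B$ such that $\widetilde Q_{0,\pm}^{\eps}(\zeta)$ is compact for $\zeta\in\Omega$'', combined with the fact that on $\bC^{\pm}$ the original $Q_0^{\eps}(\zeta)=BR_0^{\eps}(\zeta)A^*$ is compact under the same assumptions). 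Thus $\widetilde Q_{0,\pm}^{\eps}\colon\Lambda_\pm\to\cK(\cK)$ is an analytic compact-operator-valued function.

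Next I would exhibit one point of $\Lambda_\pm$ at which $\widetilde G_{0,\pm}^{\eps}$ is invertible. For $\zeta\in\bC^{\pm}$ one has $\widetilde G_{0,\pm}^{\eps}(\zeta)=G_0^{\eps}(\zeta)=1+Q_0^{\eps}(\zeta)$, and this is invertible because of the identity \eqref{resolvent-eq}, or equivalently because for $\im\zeta\neq 0$ the second resolvent equation forces $G_0^{\eps}(\zeta)^{-1}$ to exist (this is the assertion made just after \eqref{def-G}). Moreover, for $\im\zeta\to\pm\infty$ one has $\norm{Q_0^{\eps}(\zeta)}\to 0$, so $\widetilde G_{0,\pm}^{\eps}(\zeta)$ is invertible on an open subset of $\Lambda_\pm$. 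The analytic Fredholm theorem (see, e.g., Reed–Simon Vol.\ I, Theorem VI.14) then yields a closed discrete subset $S_\pm^{\eps}\subset\Lambda_\pm$ such that $\widetilde G_{0,\pm}^{\eps}(\zeta)$ is invertible on $\Lambda_\pm\setminus S_\pm^{\eps}$ and $(\widetilde G_{0,\pm}^{\eps})^{-1}$ is meromorphic on $\Lambda_\pm$ with pole set contained in $S_\pm^{\eps}$.

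Finally I would translate the discreteness into the stated accumulation structure. Since $\widetilde G_{0,\pm}^{\eps}$ is already invertible on $\bC^{\pm}$, the set $S_\pm^{\eps}$ is contained in $I\cup\Omega^{\mp}$. Define $e_\pm^{\eps}:=S_\pm^{\eps}\cap I$ and $r_\pm^{\eps}:=S_\pm^{\eps}\cap\Omega^{\mp}$. Because $S_\pm^{\eps}$ is discrete and closed in $\Lambda_\pm$, any accumulation point must lie in $\partial\Lambda_\pm\subset\bC$. The boundary of $\Lambda_\pm$ decomposes as $\partial\Omega^{\mp}\setminus I$ together with the two endpoints of $I$ (the rest of $\bR$ separates $\bC^\pm$ from nothing inside $\Lambda_\pm$). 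Points of $e_\pm^{\eps}\subset I$ cannot accumulate at an interior point of $I$ (that would lie in $\Lambda_\pm$), so the only possible accumulation points of $e_\pm^{\eps}$ are the two endpoints of $I$; points of $r_\pm^{\eps}\subset\Omega^{\mp}$ cannot accumulate at an interior point of $\Omega^{\mp}$ nor at a point of $I$, leaving $\partial\Omega^{\mp}\setminus I$ as the only possible accumulation set. This gives the two assertions of the proposition.

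The main obstacle is essentially bookkeeping rather than conceptual: one must be careful that the analytic Fredholm theorem is being applied on the correct open connected set $\Lambda_\pm$ (so that invertibility at one point propagates) and that the accumulation analysis distinguishes between the real portion $I$ and the complex portion $\Omega^{\mp}$ of the domain. The underlying compactness assumption on $\widetilde Q_{0,\pm}^{\eps}$ is a hypothesis of the paper, so its verification for concrete perturbations is deferred to later sections.
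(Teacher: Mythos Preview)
Your proof is correct and matches exactly the approach the paper indicates: the authors do not give a detailed proof here but merely cite \cite[Proposition~3.2]{AJ} and note that ``We can then use the analytic Fredholm theorem to obtain the following result,'' which is precisely the argument you have carried out. The only point worth double-checking is that $\Lambda_\pm=\bC^\pm\cup I\cup\Omega^\mp$ is open and connected so that invertibility on $\bC^\pm$ propagates via the analytic Fredholm alternative, but this follows from the standing hypothesis that $\Omega$ is open with $\Omega\cap\bR=I$.
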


We define $\bC^{(\eps)}$, such that $\bC^{(0)}=\bC^2$ and $\bC^{(\eps)}=\bC$, $\eps>0$.
We introduce the dense subsets 
\begin{multline}
\cR_0^{\eps}=\{f\in L^2(I;\bC^{(\eps)})\,|\,
 f\colon I\to \bC^{(\eps)}\\ \text{has an analytic continuation to $\Omega$ with values in $\bC^{(\eps)}$}\}.
 \end{multline}
 
We then have the result that for $f,g\in (F^{\eps})^{-1}\cR_0^{\eps}$ the matrix element $\ip{f}{R_0^{\eps}(\zeta)g}$ has an analytic continuation from $\bC^{\pm}$ to $\bC^{\pm}\cup I\cup\Omega^{\mp}$, see \cite[Proposition 3.6]{AJ}. Using \eqref{resolvent-eq} we can get a meromorphic continuation of matrix elements of the full resolvent. For each $\eps\geq0$ we have that $e_+^{\eps}=e_-^{\eps}=e^{\eps}=I\cap\sigma_p(H^{\eps})$, see \cite[Theorem 3.9]{AJ}.

In the sequel we will only consider $r_+^{\eps}$. These points are the possible locations of poles of the meromorphically continued full resolvent matrix elements in the lower half plane, and are called the resonances. We now recall the results from \cite{AJ} identifying these with poles of the meromorphically continued scattering matrix.

For $\zeta\in(\bC^+\cup I\cup \Omega^{-})\setminus(e_{+}^{\eps}\cup r_{+}^{\eps})$ we introduce the notation $\widetilde{G}_+^{\eps}(\zeta)=\widetilde{G}_{0,+}^{\eps}(\zeta)^{-1}$. We define
$\widetilde{G}_-^{\eps}(\zeta)$ analogously. 

We have the following formulas for the scattering matrix and its inverse, see \cite[Theorem 3.11]{AJ}.
\begin{align}
S^{\eps}(\lambda)&=1-2\pi i T^{\eps}(\lambda;A) \widetilde{G}_+^{\eps}(\lambda)
T^{\eps}(\overline{\lambda};B)^{\ast},\\
S^{\eps}(\lambda)^{-1}&=1+2\pi i T^{\eps}(\lambda;A) \widetilde{G}_-^{\eps}(\lambda)
T^{\eps}(\overline{\lambda};B)^{\ast}.\label{Sinv}
\end{align}
We have a meromorphic extension of $S^{\eps}(\lambda)$ to $\Omega$ with poles at most in $r_+^{\eps}$. Note that the singularities in $e^{\eps}$ are removable. Analogously for $S^{\eps}(\lambda)^{-1}$, now with poles at most in $r_-^{\eps}$.

The main result in \cite{AJ} is the following theorem.
\begin{theorem}[{\cite[Theorem 3.12]{AJ}}]\label{main}
The set of poles of $S^{\eps}(\zeta)$ in $\Omega$ is equal to the set $r_+^{\eps}$. For a given $\eps\geq0$ and $\zeta_0\in r_+^{\eps}$ we have that $\Ker(\widetilde{G}_{0,+}^{\eps}(\zeta_0))$ is isomorphic to $\Ker(S^{\eps}(\zeta_0)^{-1})$.
\end{theorem}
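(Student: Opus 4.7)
The plan is to prove a two-sided inclusion between the pole set of $S^{\eps}$ and $r_+^{\eps}$, establishing the kernel isomorphism as the core intermediate step. The forward inclusion is essentially cost-free: in the formula
\begin{equation*}
S^{\eps}(\lambda)=1-2\pi i T^{\eps}(\lambda;A)\widetilde{G}_+^{\eps}(\lambda)T^{\eps}(\overline{\lambda};B)^{\ast},
\end{equation*}
the trace operators are analytic on $\Omega$ and $\widetilde{G}_+^{\eps}=(\widetilde{G}_{0,+}^{\eps})^{-1}$ is meromorphic with poles contained in $r_+^{\eps}$, so any pole of $S^{\eps}$ in $\Omega$ must lie in $r_+^{\eps}$. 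A preparatory observation for the reverse inclusion is that $r_+^{\eps}\subset\Omega^{-}$ while $r_-^{\eps}\subset\Omega^{+}$, so for $\zeta_0\in r_+^{\eps}$ the operator $\widetilde{G}_{0,-}^{\eps}(\zeta_0)$ is invertible and, via \eqref{Sinv}, $S^{\eps}(\zeta_0)^{-1}$ is a well-defined bounded operator on $\bC^{(\eps)}$.

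The heart of the argument is to build an explicit bijection between $\Ker(\widetilde{G}_{0,+}^{\eps}(\zeta_0))$ and $\Ker(S^{\eps}(\zeta_0)^{-1})$ using the jump relation from Proposition~\ref{prop21}, rewritten as $\widetilde{G}_{0,+}^{\eps}(\zeta)-\widetilde{G}_{0,-}^{\eps}(\zeta)=2\pi i T^{\eps}(\overline{\zeta};B)^{\ast}T^{\eps}(\zeta;A)$. Define $\Phi(\psi)=T^{\eps}(\zeta_0;A)\psi$. For $\psi\in\Ker(\widetilde{G}_{0,+}^{\eps}(\zeta_0))$, the jump relation gives $\widetilde{G}_{0,-}^{\eps}(\zeta_0)\psi=-2\pi i T^{\eps}(\overline{\zeta_0};B)^{\ast}\Phi(\psi)$, so
\begin{equation*}
\psi=-2\pi i \widetilde{G}_-^{\eps}(\zeta_0)T^{\eps}(\overline{\zeta_0};B)^{\ast}\Phi(\psi);
\end{equation*}
this reconstruction formula immediately shows $\Phi$ is injective, and applying $T^{\eps}(\zeta_0;A)$ to it yields $\Phi(\psi)-S^{\eps}(\zeta_0)^{-1}\Phi(\psi)=\Phi(\psi)$, i.e.\ $\Phi(\psi)\in\Ker(S^{\eps}(\zeta_0)^{-1})$. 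Conversely, given $\phi\in\Ker(S^{\eps}(\zeta_0)^{-1})$, set $\psi:=-2\pi i \widetilde{G}_-^{\eps}(\zeta_0)T^{\eps}(\overline{\zeta_0};B)^{\ast}\phi$. Running the same algebra in reverse, the hypothesis $S^{\eps}(\zeta_0)^{-1}\phi=0$ gives $T^{\eps}(\zeta_0;A)\psi=\phi$, and then the jump relation produces
\begin{equation*}
\widetilde{G}_{0,+}^{\eps}(\zeta_0)\psi
=\widetilde{G}_{0,-}^{\eps}(\zeta_0)\psi+2\pi i T^{\eps}(\overline{\zeta_0};B)^{\ast}\phi
=-2\pi i T^{\eps}(\overline{\zeta_0};B)^{\ast}\phi+2\pi i T^{\eps}(\overline{\zeta_0};B)^{\ast}\phi=0,
\end{equation*}
giving surjectivity. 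This establishes the claimed isomorphism of finite-dimensional kernels.

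To conclude the pole statement, note that $\zeta_0\in r_+^{\eps}$ means $\Ker(\widetilde{G}_{0,+}^{\eps}(\zeta_0))\neq\{0\}$ by the analytic Fredholm theorem, so by the isomorphism $\Ker(S^{\eps}(\zeta_0)^{-1})\neq\{0\}$. Since $\bC^{(\eps)}$ is finite-dimensional, this forces $\det S^{\eps}(\zeta_0)^{-1}=0$; as $\det S^{\eps}(\zeta)^{-1}$ is a non-trivial meromorphic function, Cramer's rule then gives a pole of $S^{\eps}(\zeta)$ at $\zeta_0$. The main obstacle is properly threading the jump relation to obtain both the reconstruction formula and the reverse construction; a secondary technical point is verifying that for the Stark case the trace operators really extend analytically to $\Omega$ in such a way that all of the manipulations above hold pointwise in $\zeta$, but this is supplied by the standing assumptions of the excerpt.
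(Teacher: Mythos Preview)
The paper does not actually give its own proof of this theorem: it is quoted verbatim from \cite[Theorem~3.12]{AJ} and used as a black box in Sections~\ref{sect4} and~5. So there is nothing in the present paper to compare your argument against.

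That said, your argument is correct and is essentially the standard one (and, as far as one can tell from the statement and surrounding formalism, the one in \cite{AJ}). The key ingredients---the jump relation of Proposition~\ref{prop21}, the observation that $r_+^{\eps}\subset\Omega^-$ forces $\widetilde{G}_{0,-}^{\eps}(\zeta_0)$ to be invertible, and the reconstruction formula $\psi=-2\pi i\,\widetilde{G}_-^{\eps}(\zeta_0)T^{\eps}(\overline{\zeta_0};B)^{\ast}\Phi(\psi)$---are exactly what is needed, and your two-way construction of the bijection is clean. The only point worth making explicit is why $\det S^{\eps}(\zeta)^{-1}$ is not identically zero: this follows from unitarity of $S^{\eps}(\lambda)$ for $\lambda\in I$, which gives $\lvert\det S^{\eps}(\lambda)\rvert=1$ there. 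With that, the identity $\det S^{\eps}(\zeta)\cdot\det S^{\eps}(\zeta)^{-1}\equiv 1$ (valid by analytic continuation from $I$) forces a pole of $S^{\eps}$ at any zero of $\det S^{\eps}(\cdot)^{-1}$.
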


The relation between existence of a resonance and existence of a non-zero solution to 
$S^{\eps}(\zeta_0)^{-1}u=0$ given by \eqref{Sinv} will be used to study the stability or instability of resonances for a sequence $\eps_n$ with $\eps_n\to0$ as $n\to\infty$.

\section{Rank one perturbation}\label{sect3}
We consider the case of $V$ a rank one perturbation. We assume $V=c\ket{\psi}\bra{\psi}$ for some vector $\psi\in L^2(\bR)$, $\psi\neq0$, and $c$ real, $c\neq0$. We take $\cK=\bC$ and $A=\bra{\psi}$, $B=c\bra{\psi}$.

Consider first the case $\eps=0$. We have for $z\in\cK$
\begin{equation}\label{T0}
T^0(\lambda;A)z=\frac{1}{\sqrt{2}\lambda^{1/4}}\begin{bmatrix}
\widehat{\psi}(\sqrt{\lambda})\\
\widehat{\psi}(-\sqrt{\lambda})
\end{bmatrix}z.
\end{equation}
The determination of  $\sqrt{\lambda}$ is the one with $\sqrt{\lambda}>0$ for $\lambda>0$ and the cut along $(-\infty,0]$.
We need to be able to continue this operator analytically in $\lambda$. We introduce the following assumption, where $L^2_{\rm comp}(\bR)$ denotes the compactly supported functions in $L^2$.

\begin{assumption}\label{assump31}
Assume $\psi\in L^2_{\rm comp}(\bR)$. 
\end{assumption}
It follows from this assumption that $\widehat{\psi}$ has an analytic continuation from $\bR$ to the complex plane $\bC$.

We take $\Omega=\bC\setminus(-\infty,0]$. Then it follows from \eqref{T0} and Assumption~\ref{assump31} that $T^0(\lambda;A)$ can be continued analytically to $\Omega$. We have the same result for $T^0(\lambda;B)=cT^0(\lambda;A)$.

We now consider $\zeta$ with $\re\zeta>0$ and $\im\zeta<0$. We continue \eqref{T0} to these $\zeta$. We also have 
\begin{equation}
T^0(\overline{\zeta},B)^{\ast}=
\frac{c}{\sqrt{2}\zeta^{1/4}}\begin{bmatrix}
\widehat{\overline{\psi}}(-\sqrt{\zeta}) & \widehat{\overline{\psi}}(\sqrt{\zeta})
\end{bmatrix},
\end{equation}since $\overline{\widehat{\psi}(\sqrt{\overline{\zeta}})}
=\widehat{\overline{\psi}}(-\sqrt{\zeta})$.

Continuing \eqref{Sinv} to $\{\zeta\,|\,\re\zeta>0,\; \im\zeta<0\}$ we 
get the following components of the matrix $S^0(\zeta)^{-1}$. 
\begin{align}
(S^0(\zeta)^{-1})_{11}&=1+\frac{\pi i c}{\sqrt{\zeta}
}G_-^0(\zeta)\widehat{\psi}(\sqrt{\zeta})
\widehat{\overline{\psi}}(-\sqrt{\zeta}),\label{S11}\\
(S^0(\zeta)^{-1})_{12}&=\frac{\pi i c}{\sqrt{\zeta}
}G_-^0(\zeta)\widehat{\psi}(\sqrt{\zeta})
\widehat{\overline{\psi}}(\sqrt{\zeta}),\label{S12}\\
(S^0(\zeta)^{-1})_{21}&=\frac{\pi i c}{\sqrt{\zeta}
}G_-^0(\zeta)\widehat{\psi}(-\sqrt{\zeta})
\widehat{\overline{\psi}}(-\sqrt{\zeta}),\label{S21}\\
(S^0(\zeta)^{-1})_{22}&=1+\frac{\pi i c}{\sqrt{\zeta}
}G_-^0(\zeta)\widehat{\psi}(-\sqrt{\zeta})
\widehat{\overline{\psi}}(\sqrt{\zeta})\label{S22}.
\end{align}

Next we consider $\eps>0$. Using \eqref{U-def} and \eqref{Teps-def} we have for $z\in\cK=\bC$
\begin{equation}\label{Teps}
T^{\eps}(\lambda;A)z=\frac{1}{\sqrt{2\pi}}\frac{1}{\sqrt{\eps}}
\int_{-\infty}^{\infty} e^{i\lambda p/\eps}
e^{-ip^3/(3\eps)}\widehat{\psi}(p)dp \cdot z.
\end{equation}
We want to continue analytically in $\lambda$ into $\Omega$.

To this end we study the integral in \eqref{Teps}. Assume $u\in\cS(\bR)$ and define
\begin{equation}
\Gamma^{\eps}(\lambda)u=\frac{1}{2\pi\sqrt{\eps}}\int_{-\infty}^{\infty}\Bigl(
\int_{-\infty}^{\infty}
e^{ip(\lambda/\eps)-ip^3/(3\eps)-ixp}u(x)dx\Bigr)dp,
\end{equation}
where the successive integrals converge absolutely.
Thus, it can also be represented 
as the limit of the double integral 
\begin{equation}\label{eq4}
\Gamma^{\eps}(\lambda)u=\lim_{\delta\downarrow0}\frac{1}{2\pi\sqrt{\eps}}
\int_{-\infty}^{\infty}\int_{-\infty}^{\infty} e^{ip(\lambda/\eps)-ip^3/(3\eps)-ixp-\delta p^2}u(x)dxdp.
\end{equation}
We note that $\Gamma^{\eps}(\lambda)u=(U(\eps)u)(\lambda)$. 

From \eqref{eq4} we see that for $z\in\cK=\bC$ 
\begin{equation}\label{eq5}
(\Gamma^{\eps}(\lambda))^{\ast}z=\Bigl(\lim_{\delta\downarrow0}\frac{1}{2\pi\sqrt{\eps}}
\int_{-\infty}^{\infty} e^{-ip(\lambda/\eps)+ip^3/(3\eps)+ixp-\delta p^2}dp\Bigr)z.
\end{equation}
We continue the function inside the parentheses in \eqref{eq5} from $\lambda\in\bR$ to $\zeta\in\bC$ and define
\begin{equation}\label{eq6}
\cG^{\eps}(\zeta,x)=\lim_{\delta\downarrow0}\frac{1}{2\pi\sqrt{\eps}}
\int_{-\infty}^{\infty} e^{-ip(\zeta/\eps)+ip^3/(3\eps)+ixp-\delta p^2}dp.
\end{equation}
For $\zeta\in\bR$ and $x\in\bR$ we have that $\cG^{\eps}(\zeta,x)\in\bR$, since then the imaginary part of the integrand in \eqref{eq6} is an odd function of $p$.

\begin{lemma}\label{lemma1}
We have the following results:
\begin{itemize} 
\item[\rm(1)] The limit in \eqref{eq6} is uniform in compact subsets of $\bC\times\bR$.
\item[\rm(2)] For any $\eta>0$ we can write
\begin{equation}\label{eq312}
\cG^{\eps}(\zeta,x)=\frac{e^{\eta(\zeta/\eps-x)+\eta^3/(3\eps)}}{2\pi\sqrt{\eps}}
\int_{-\infty}^{\infty} e^{-p^2\eta/\eps-i(\zeta p-p^3/3+p\eta^2)/\eps+ixp}dp.
\end{equation}
\item[\rm(3)] $\cG^{\eps}(\zeta,x)$ can be extended to an entire function of $(\zeta,x)\in\bC\times\bC$. For all $(\zeta,x)\in\bC\times\bR$ we have $\overline{\cG^{\eps}(\overline{\zeta},x)}=\cG^{\eps}(\zeta,x)$.
\item[\rm(4)] $\cG^{\eps}(\zeta,x)$ satisfies
\begin{equation}
\bigl(-\frac{d^2}{dx^2}+\eps x -\zeta\bigr)\cG^{\eps}(\zeta,x)=0,\quad
(\zeta,x)\in\bC\times\bR.
\end{equation}
\end{itemize}
\end{lemma}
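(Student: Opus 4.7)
The strategy is to apply Cauchy's theorem to shift the contour in \eqref{eq6} from $\bR$ to $\bR+i\eta$, converting the conditionally convergent oscillatory integral into an absolutely convergent Gaussian-damped one; all four claims then follow in order from this representation.

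\emph{Parts (1) and (2).} Fix $\delta>0$. Since the integrand of \eqref{eq6} is entire in $p$, Cauchy's theorem applied to the rectangle with vertices $\pm R$, $\pm R+i\eta$ gives that the integral over $\bR$ equals the integral over $\bR+i\eta$ once the two vertical segments are shown to vanish as $R\to\infty$. A short calculation of the real part of the exponent on $\{\pm R+it:0\le t\le\eta\}$ shows the integrand there decays exponentially in $R$ (for $\delta>0$) uniformly for $(\zeta,x)$ in any compact set $K\subset\bC\times\bR$, so the vertical contributions vanish. Substituting $p\mapsto p+i\eta$ in the shifted integral and expanding $(p+i\eta)^3$ yields
\[
\frac{1}{2\pi\sqrt\eps}\int_\bR e^{-ip\zeta/\eps+ip^3/(3\eps)+ixp-\delta p^2}dp=\frac{e^{\eta(\zeta/\eps-x)+\eta^3/(3\eps)+\delta\eta^2}}{2\pi\sqrt\eps}\int_\bR e^{-(\eta/\eps+\delta)p^2-i(\zeta p-p^3/3+\eta^2p)/\eps+ixp-2i\delta\eta p}dp.
\]
On the right, the integrand is dominated by $e^{-\eta p^2/\eps}\cdot C_K$ uniformly in $\delta\in[0,1]$, so dominated convergence lets me pass to $\delta\downarrow 0$. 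The limit is exactly \eqref{eq312}, and the uniform-on-compacts control yields the uniform convergence claim of (1).

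\emph{Part (3).} Reading \eqref{eq312} with $(\zeta,x)\in\bC^2$, one has $|e^{-i\zeta p/\eps+ixp}|\le e^{(|\im\zeta|/\eps+|\im x|)|p|}$, which is absorbed into $e^{-\eta p^2/(2\eps)}$ on any compact set of $\bC^2$. Differentiation under the integral is therefore legitimate, and $\cG^\eps$ extends to an entire function on $\bC^2$. The identity $\overline{\cG^\eps(\zeta,x)}=\cG^\eps(\bar\zeta,x)$ on $\bC\times\bR$ follows by complex-conjugating the integrand of \eqref{eq6} and substituting $p\mapsto -p$, which restores the original integrand with $\zeta$ replaced by $\bar\zeta$.

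\emph{Part (4).} Denote the prefactor and the $p$-integrand of \eqref{eq312} by $E_\eta(\zeta,x)$ and $\Phi_\eta(p;\zeta,x)$. A short calculation using $\partial_x E_\eta=-\eta E_\eta$, $\partial_x\Phi_\eta=ip\Phi_\eta$, and $\partial_p\Phi_\eta=(i/\eps)[(p+i\eta)^2+\eps x-\zeta]\Phi_\eta$ gives the pointwise identity
\[
(-\partial_x^2+\eps x-\zeta)\bigl(E_\eta\Phi_\eta\bigr)=\frac{\eps}{i}\,\partial_p\bigl(E_\eta\Phi_\eta\bigr).
\]
Integrating in $p$, the Gaussian bound from Part (3) justifies moving the $x$-derivatives past the integral, and the right-hand side integrates to a boundary term that vanishes by Gaussian decay, so $\cG^\eps$ satisfies the Stark equation. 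The only delicate step in the whole argument is the interchange in Part (1,2) of the contour shift with the regulator limit $\delta\downarrow 0$: keeping $\delta>0$ during the shift is essential, since \eqref{eq6} with $\delta=0$ is only conditionally convergent and does not admit a direct contour-shift justification. Once representation \eqref{eq312} is in hand, the remaining parts reduce to routine differentiation under the integral.
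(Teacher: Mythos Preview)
Your argument is correct and follows essentially the same route as the paper: shift the contour in \eqref{eq6} from $\bR$ to $\bR+i\eta$ via Cauchy's theorem while $\delta>0$, then let $\delta\downarrow0$ by dominated convergence to obtain \eqref{eq312}, from which (1)--(3) follow. The paper first carries this out for $(\zeta,x)\in\bR\times\bR$ and then extends, invoking the Schwarz reflection principle for the conjugation identity, whereas you work with complex $\zeta$ from the start and prove the conjugation identity by the substitution $p\mapsto-p$; these are cosmetic differences. Your verification of (4) via the identity $(-\partial_x^2+\eps x-\zeta)(E_\eta\Phi_\eta)=(\eps/i)\partial_p(E_\eta\Phi_\eta)$ is a clean way to handle the step the paper leaves to the reader.
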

\begin{proof}
Let $(\zeta,x)\in\bR\times\bR$, $\eps>0$, and $\ell>0$ be fixed. Then there exists a constant $C_0$ such that for $0\leq\eta\leq\ell$ and $p\in\bR$ we have
\begin{align}
\re\bigr(-i(p+i\eta)
(\zeta/\eps)&+i(p+i\eta)^3/(3\eps)+ix(p+i\eta)-\delta(p+i\eta)^2\bigr)\notag\\
&=-p^2\eta/\eps-\delta p^2 + \eta^3/(3\eps)-x\eta+\eta^2\delta+\eta\zeta/\eps\notag\\
&\leq -(\delta+\eta/\eps)p^2+C.\label{eq7}
\end{align}
Then using Cauchy's theorem we can change the integration contour to $\im p=i\eta$ for any $\eta>0$ such that
\begin{equation}
\cG^{\eps}(\zeta,x)=\lim_{\delta\downarrow0}\frac{1}{2\pi\sqrt{\eps}}
\int_{-\infty}^{\infty} e^{-i(p+i\eta)
(\zeta/\eps)+i(p+i\eta)^3/(3\eps)+ix(p+i\eta)-\delta(p+i\eta)^2}dp.\label{eq8}
\end{equation}
Then using \eqref{eq7} we conclude that for any $\eta>0$ the limit in \eqref{eq8} (hence also the limit in \eqref{eq6}) exists  uniformly with respect to $(\zeta,x)$ in a compact subset of $\bR\times\bR$ along with all derivatives. We obtain
\begin{equation*}
\cG^{\eps}(\zeta,x)=\frac{1}{2\pi\sqrt{\eps}}
\int_{-\infty}^{\infty}e^{-i(p+i\eta)(\zeta/\eps)+i(p+i\eta)^3/(3\eps)+ix(p+i\eta)}dp,
\end{equation*}
which may be written in the form \eqref{eq312}.

 It follows that $\cG^{\eps}(\zeta,x)$ can be extended to an entire function of $(\zeta,x)\in\bC\times\bC$. For $x\in\bR$ we have that $\cG^{\eps}(\zeta,x)$ is real for $\zeta\in\bR$. The reflection principle from complex analysis implies that 
$\overline{\cG^{\eps}(\overline{\zeta},x)}=\cG^{\eps}(\zeta,x)$.
We leave the proof of part (4) to the reader.
\end{proof}

We study the behavior of $\cG^{\eps}(\zeta)$ as $\eps\downarrow0$ in the sector $\{\zeta\in\bC\,|\, -\pi/3<\arg \zeta <0\}$. We first represent it using the Airy function. We write $K\Subset\bR$ for a compact subset.
\begin{lemma}
Let $K\Subset\bR$ and let $M\Subset\{\zeta\in\bC\,|\, -\pi/3<\arg \zeta <0\}$.
Then we have for $x\in K$ and $\zeta\in M$
\begin{equation}\label{Ai-G}
\cG^{\eps}(\zeta,x)=\frac{1}{\eps^{\frac16}}\Ai(\omega), 
\quad \omega=\eps^{\frac13}x-\eps^{-\frac23}\zeta,
\end{equation}
where $\Ai(\omega)$ denotes the Airy function
\begin{equation}
\Ai(\omega)=\frac{1}{2\pi i}\int_{\infty e^{-i\pi/3}}^{\infty e^{i\pi/3}}
\exp({t^3}/{3}-\omega t)dt.
\end{equation}
Here the integral is computed over the halflines $\infty e^{-i\pi/3}\to0
\to\infty e^{i\pi/3}$.
\end{lemma}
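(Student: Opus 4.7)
The plan is to make the scaling substitution $p=\eps^{1/3}s$ in the defining integral \eqref{eq6} and recognise the resulting oscillatory integral as the Airy contour integral. Under this substitution the phase reduces to
\[
-\frac{i\zeta p}{\eps}+\frac{ip^3}{3\eps}+ixp=i\Bigl(\frac{s^3}{3}+\omega s\Bigr),\quad \omega=\eps^{1/3}x-\eps^{-2/3}\zeta,
\]
the Jacobian $\eps^{1/3}\,ds$ combines with the prefactor $1/\sqrt\eps$ to give $\eps^{-1/6}$, and the Gaussian cut-off becomes $e^{-\delta\eps^{2/3}s^2}$, yielding
\[
\cG^\eps(\zeta,x)=\frac{\eps^{-1/6}}{2\pi}\lim_{\delta\downarrow 0}\int_{-\infty}^\infty e^{i(s^3/3+\omega s)-\delta\eps^{2/3}s^2}\,ds.
\]

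It remains to show that this limit equals $2\pi\Ai(\omega)$. For each fixed $\delta>0$ the integrand is absolutely integrable on $\bR$, and by Cauchy's theorem the contour may be rotated into the upper half $s$-plane onto a contour $\Gamma$ whose arms go off to infinity along the rays $\arg s=\pi/4$ and $\arg s=3\pi/4$. These rays lie strictly inside the sectors $\{\sin(3\arg s)>0\}$ of cubic decay for $e^{is^3/3}$. The joining arcs at radius $R$ tend to zero as $R\to\infty$: on an arc with angle $\phi$ the integrand is bounded by $e^{-(R^3/3)\sin(3\phi)-\delta\eps^{2/3}R^2\cos(2\phi)+|\omega|R}$, and the Gaussian factor handles the marginal region $\arg s\to 0,\pi$ where the cubic term alone fails. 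On $\Gamma$ itself the cubic factor dominates uniformly in $\delta\in[0,1]$, so dominated convergence permits passage $\delta\downarrow 0$ under the integral. Finally, the change of variable $t=-is$ sends $\Gamma$ to a contour equivalent, through the Airy decay sectors, to the $\mathcal C$ of the statement, and converts the integrand to $e^{t^3/3-\omega t}$; tracking $ds=i\,dt$ and the orientation of $\mathcal C$ then produces exactly $2\pi\Ai(\omega)$, as required.

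The main technical point is the bookkeeping of factors of $i$ and orientation under $t=-is$, together with the uniform domination needed to commute $\delta\downarrow 0$ with the contour integral; both are direct once the sectors of cubic decay have been identified. The restriction $M\Subset\{-\pi/3<\arg\zeta<0\}$ and $K\Subset\bR$ is inessential for the identity itself since both sides are entire in $(\zeta,x)$ by Lemma~\ref{lemma1}(3), so once the identity is established on any non-empty open set it extends everywhere by analytic continuation. The specific sector will instead play its role in the sequel, where the asymptotics of $\Ai(\omega)$ as $|\omega|\to\infty$ with $\arg\omega\in(2\pi/3,\pi)$---the regime of oscillatory Airy behaviour, corresponding to $\eps\downarrow 0$ for $\zeta$ in this sector---drive the instability result.
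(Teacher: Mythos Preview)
Your proof is correct and follows essentially the same route as the paper's: both arguments amount to a scaling substitution combined with a rotation $t=-is$ (equivalently $q=-ip$), followed by a contour deformation through the cubic decay sectors that allows the limit $\delta\downarrow 0$ to be taken under the integral sign. The only cosmetic differences are the order of operations (you scale first and rotate last, the paper rotates first and then scales) and the particular rays chosen for the intermediate contour ($\arg t=\pm\pi/4$ in your version versus $\arg t=\pm\pi/3$ in the paper's); both lie in the sectors $\{\re t^3<0,\ \re t^2\le 0\}$, so the justification of the deformation and of the $\delta\downarrow 0$ limit is the same in each case.
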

\begin{proof}
We first make the change of variables 
$q= -ip$ or $p=iq$ in the integral in \eqref{eq6} 
and then $q=\eps^{\frac13}t$ and write it as 
the line integral in the complex plane
\begin{align}
\cG^{\eps}(\zeta,x)&=\lim_{\delta\downarrow0}\frac{1}{2i\pi\sqrt{\eps}}
\int_{-i\infty}^{i\infty}e^{q(\zeta/\eps)+q^3/(3\eps)-xq+\delta q^2}dq\\
&=\lim_{\delta\downarrow0}\frac{1}{2i\pi\eps^{\frac16}}
\int_{-i\infty}^{i\infty}e^{t^3/3-t\omega+\delta\eps^{\frac23}t^2}dt,
\end{align} 
where $\omega= \eps^{\frac13}x-\eps^{-\frac23}\zeta$.

We now want to deform the contour. We note the following implications
\begin{gather*}
-\pi/2 \leq \arg t \leq -\pi/3 \Rightarrow 
-3\pi/2 \leq \arg t^3 \leq -\pi, \quad 
-\pi \leq \arg t^2 \leq -2\pi/3; \\
\pi/3 \leq \arg t \leq \pi/2 \Rightarrow 
\pi \leq \arg t^3 \leq 3\pi/2, \quad 
2\pi/3 \leq \arg t^2 \leq \pi. 
\end{gather*}
Thus for $t \in \{-\pi/2 \leq \arg t \leq -\pi/3\}
\cup \{\pi/3 \leq \arg t \leq \pi/2\}$, we have 
$\re t^3 \leq 0$ and $\re t^2 \leq0$ and we may deform 
the contour of integration to the line graph 
$e^{-\pi/3}\infty \to 0 \to e^{\pi/3}\infty$ 
on which $\arg t^3 = \pi$ or $\arg t^3=-\pi$ 
and $t^3<0$. 
Thus the limit $\delta \to 0$ may be taken inside 
the integral sign and we obtain the desired expression \eqref{Ai-G}.
\end{proof}

Now we define $\rho=-\zeta$ for $\zeta \in M$ so that 
$2\pi/3+\kappa<\arg \rho<\pi-\kappa$ for a $\kappa>0$ and 
\begin{equation*}
\omega= \eps^{\frac13}x-\eps^{-\frac23}\zeta
= \eps^{-\frac23}\rho(1 + \eps (x/\rho)).
\end{equation*}
Then for sufficiently small $\eps_0>0$ there 
exists another constant $\kappa>0$ such that 
for any $0<\eps<\eps_0$, $x\in K$ and $\zeta\in M$, 
$1+\eps(x/\rho)$ is a small perturbation of $1$ 
and $2\pi/3+\kappa < \arg \omega < \pi-\kappa$. It follows 
from (9.5.4) and (9.7.5) in~\cite{DLMF} 
that $\Ai(\omega)$ has the following asymptotic 
expansion as $\abs{\omega}\to \infty$ or $\eps \downarrow 0$
\begin{equation*}
\Ai(\omega) \sim \frac{e^{-\xi}}{2\sqrt{\pi}\omega^{1/4}}
\sum_{k=0}^\infty (-1)^k \frac{u_k}{\xi^k},
\end{equation*}
where $\xi$ is defined in~\cite[(9.7.1)]{DLMF} (the notation there is $\zeta$)
as the principal branch of
\begin{equation*}
\xi=\frac23 \omega^{\frac32}, 
\end{equation*}
and $u_0=1$ and $u_1, \dots$ are constants defined in~\cite[(9.7.2)]{DLMF}.
Note that $\pi <\arg \xi < 3\pi/2$ and 
$0< \arg(-\xi)< \pi/2$ so that $\re(-\xi)>0$ and 
$\Ai(\omega)$ blows up as $\eps \downarrow 0$. 

Using the binomial formula 
$(1+\tau)^{3/2}=1+\frac32 \tau+\frac38 \tau^2+O(\tau^3)$, 
we have  as $\eps \downarrow 0$ uniformly 
with respect to $x\in K$ and $\zeta\in M$ that
\begin{align*}
\xi& = \frac{2}{3}\frac{{\rho}^\frac32}{\eps}
\Bigl(1+\frac{\eps x}{\rho}\Bigr)^{\frac32} 
= \frac{2}{3}\frac{{\rho}^\frac32}{\eps}
\Bigl\{1+ \frac{3\eps}{2}\frac{x}{\rho}
+ \frac{3\eps^2}{8}\Bigl(\frac{x}{\rho}\Bigr)^2 + 
O\Bigl(\frac{\eps x}{\rho}\Bigr)^3\Bigr\} \\
& = \frac{2\rho^\frac32}{3\eps} + x\rho^\frac12 
+ \frac{\eps}{4}\frac{x^2}{\rho^{\frac12}} 
+ O(\eps^2), 
\end{align*} 
hence 
\begin{equation*}
e^{-\xi}= \exp\Bigl(
-\frac{2\rho^\frac32}{3\eps} - x\rho^\frac12
\Bigr)\cdot \Bigl(1- \frac{\eps}{4}\frac{x^2}{\rho^\frac12} 
+ O(\eps^2)\Bigr).
\end{equation*}
Applying the binomial formula  to 
$\omega^{-\frac14}= \eps^{\frac16}\rho^{-\frac14}
(1 + \eps (x/\rho))^{-\frac14}$, we obtain  
\begin{equation*}
\frac{1}{\eps^\frac16\omega^\frac14} 
= \rho^{-\frac14}
\bigl(1 -\frac14 \frac{\eps x}{\rho} + O(\eps^2)\bigr).
\end{equation*}
Combining these products with $u_1=5/72$ we get  
\begin{align}
\cG^\eps(\zeta,x)& = \frac{1}{\eps^\frac16}\Ai(\omega) 
= \frac{1}{2\sqrt{\pi}\rho^\frac14}
\exp\Bigl(
-\frac{2\rho^\frac32}{3\eps} - x\rho^\frac12
\Bigr) \notag \\
& \times \Bigl(
1- \frac{\eps}{4}
\frac{x^2}{\rho^\frac12} 
+ O(\eps^2)
\Bigr)
\Bigl(
1 -\frac{\eps}{4} 
\frac{x}{\rho} 
+ O(\eps^2)
\Bigr) 
\Bigl(
1- \frac{3\eps}{2} 
\frac{u_1}{\rho^\frac32}
+ O(\eps^2)
\Bigr) \notag \\
& = \frac1{2\sqrt{\pi}\rho^\frac14}
\exp\Bigl(
-\frac{2\rho^\frac32}{3\eps} - 
x\rho^\frac12
\Bigr)
\Bigl\{
1-\frac{\eps}{4}
\Bigl(
\frac{x^2}{\rho^\frac12}
+ \frac{x}{\rho}+6\frac{u_1}{\rho^\frac32}
\Bigr) 
+ O(\eps^2)
\Bigr\}. \label{asymp-result}
\end{align}
This leads to the following lemma. 

\begin{lemma} \label{asymp-G}
We have  
\begin{equation} \label{asymp}
\lim_{\eps \downarrow 0} \cG^\eps(\zeta,x)\exp
\Bigl(\frac{2\rho^\frac32}{3\eps}\Bigr) 
= \frac{e^{i\frac{\pi}4}}
{2\sqrt{\pi}\zeta^{\frac14}}
e^{- ix\sqrt{\zeta}},
\end{equation}
uniformly with respect to 
$\zeta\in M\Subset \{\zeta\in \bC\,|\, -\pi/3<\im\zeta<0\}$ 
and $x \in K\Subset \bR$.
\end{lemma}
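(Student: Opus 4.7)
My plan is to recognize that this lemma is essentially a clean re-packaging of the asymptotic expansion \eqref{asymp-result} just derived; the main analytic work---representing $\cG^\eps(\zeta,x)$ via the Airy function, deforming the contour to extract a uniform asymptotic, and expanding the phase via the binomial theorem---has already been carried out. The proof reduces to three short steps.

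The first step is to multiply both sides of \eqref{asymp-result} by $\exp(2\rho^{3/2}/(3\eps))$. The only factor on the right-hand side that is singular as $\eps\downarrow0$ is $\exp(-2\rho^{3/2}/(3\eps))$, and it is cancelled exactly, leaving
\begin{equation*}
\cG^\eps(\zeta,x)\exp\Bigl(\frac{2\rho^{3/2}}{3\eps}\Bigr)
= \frac{e^{-x\rho^{1/2}}}{2\sqrt{\pi}\rho^{1/4}}\bigl(1+O(\eps)\bigr),
\end{equation*}
where the $O(\eps)$ remainder is uniform in $(\zeta,x)\in M\times K$. This uniformity is inherited from the uniformity of the Airy asymptotic, which in turn follows from the contour-deformation argument and the a priori bound \eqref{eq7} used in Lemma~\ref{lemma1}.

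The second step is simply to pass to the limit $\eps\downarrow0$. Since the prefactor is $\eps$-independent and $1+O(\eps)\to1$ uniformly, one gets
\begin{equation*}
\lim_{\eps\downarrow0}\cG^\eps(\zeta,x)\exp\Bigl(\frac{2\rho^{3/2}}{3\eps}\Bigr)
= \frac{e^{-x\rho^{1/2}}}{2\sqrt{\pi}\rho^{1/4}}
\end{equation*}
uniformly in $(\zeta,x)\in M\times K$.

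The third step, which is the only one requiring any care, is to re-express the limit in terms of $\zeta=-\rho$. For $\arg\zeta\in(-\pi/3,0)$ we have $\arg\rho\in(2\pi/3,\pi)$, and the principal-branch identities give $\rho^{1/2}=i\sqrt{\zeta}$ (both sides have argument in $(\pi/3,\pi/2)$ and equal modulus) and $\rho^{1/4}=e^{i\pi/4}\zeta^{1/4}$; substituting these into the displayed limit produces the closed form \eqref{asymp}. I do not expect any real obstacle: the only potential pitfall is bookkeeping with the branches of the various fractional powers in this sector, and everything else is a direct reading of \eqref{asymp-result}.
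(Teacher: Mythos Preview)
Your proposal is correct and follows essentially the same approach as the paper: read off the limit from the expansion \eqref{asymp-result}, then rewrite $\rho^{1/2}$ and $\rho^{1/4}$ in terms of $\zeta$ using the principal-branch identities $(-\zeta)^{1/2}=i\zeta^{1/2}$ and $(-\zeta)^{1/4}=e^{i\pi/4}\zeta^{1/4}$. The paper's proof is simply a terser version of your three steps.
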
 
\begin{proof}
Due to \eqref{asymp-result} we have 
\eqref{asymp} with the right hand side 
\begin{equation*}
\frac{1}{2\sqrt{\pi}(-\zeta)^\frac14}\exp
\bigl(- x(-\zeta)^\frac12\bigr), 
\end{equation*}
and we only need to fix the branch. We have 
$(-\zeta)^\frac14 = \zeta^\frac14 e^{i\frac{\pi}4}$ 
and $(-\zeta)^\frac12 = i \zeta^\frac12$. Thus the result follows.
\end{proof}

We now have all the results needed to continue $T^{\eps}(\lambda;A)$  and $T^{\eps}(\lambda;B)$ analytically to $\bC$, thus in particular to $\Omega$.
Since $B=cA$, we omit statements for $T^{\eps}(\zeta;B)$ and its adjoint. 
Let $\eps>0$.
We have that
\begin{equation}\label{Tstar}
T^{\eps}(\overline{\zeta};A)^{\ast}=\int_{-\infty}^{\infty}\cG^{\eps}(\zeta,x)
\overline{\psi}(x)dx
\end{equation}
and
\begin{equation}\label{T}
T^{\eps}(\zeta;A)=\int_{-\infty}^{\infty}\cG^{\eps}(\zeta,x)\psi(x)dx.
\end{equation}
The integrals are absolutely convergent due to Assumption~\ref{assump31} and Lemma~\ref{lemma1}. The analytic continuation follows from  Lemma~\ref{lemma1}(3).

Since we have analytic continuations of $T^{\eps}(\lambda;A)$ and $T^{\eps}(\lambda;B)$ for all $\eps\geq0$, the results on continuation of resolvents and scattering matrices, and the results on resonances are available from Section~\ref{sect2}. We will use them in  the next sections to obtain our results.

\section{A result for rank one perturbations}\label{sect4}
We now formulate and prove the main result for rank one perturbations. We need the following well known result, cf.~\cite{Y79}. Recall the definition of $G_{0}^{\eps}(\zeta)$ from \eqref{def-G}.
\begin{lemma}\label{lemma41}
Let $K\Subset\bC^-$. Then $G_{0}^{\eps}(\zeta)^{-1}$ converges strongly to $G_{0}^{0}(\zeta)^{-1}$ as $\eps\downarrow0$, uniformly with respect to $\zeta\in K$.
\end{lemma}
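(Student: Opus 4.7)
The plan is to exploit the rank-one structure to reduce the claim to convergence of scalars, and then prove the scalar convergence by the second resolvent identity. With $\cK=\bC$, $A=\bra{\psi}$ and $B=c\bra{\psi}$, the operator $Q_0^{\eps}(\zeta)=BR_0^{\eps}(\zeta)A^{\ast}$ on $\bC$ is just multiplication by the scalar $q^{\eps}(\zeta):=c\ip{\psi}{R_0^{\eps}(\zeta)\psi}$; hence $G_0^{\eps}(\zeta)$ is multiplication by $1+q^{\eps}(\zeta)$, and $G_0^{\eps}(\zeta)^{-1}$ by $(1+q^{\eps}(\zeta))^{-1}$. Strong operator convergence on $\bC$ is just scalar convergence, so it is enough to show $q^{\eps}\to q^0$ uniformly on $K$: the function $q^0$ is continuous, and $|1+q^0(\zeta)|$ is bounded away from $0$ on $K$ since $G_0^0(\zeta)$ is invertible on $\bC^-$; for small $\eps$ the same lower bound then holds for $|1+q^{\eps}(\zeta)|$, and $(1+q^{\eps})^{-1}\to(1+q^0)^{-1}$ uniformly on $K$ follows.

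For the scalar convergence I would use the second resolvent identity. Because $\supp\psi$ is compact, the explicit free resolvent kernel $\tfrac{i}{2\sqrt{\zeta}}e^{i\sqrt{\zeta}|x-y|}$ (with $\im\sqrt{\zeta}>0$) shows that $\phi_\zeta:=R_0^0(\zeta)\psi$ is smooth and decays exponentially in $x$, uniformly for $\zeta$ in compact subsets of $\bC\setminus[0,\infty)$. In particular $x\phi_\zeta\in L^2$ and $\phi_\zeta\in D(H_0^{\eps})$ for every $\eps\geq 0$, with $(H_0^{\eps}-\zeta)\phi_\zeta=\psi+\eps x\phi_\zeta$. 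Applying $R_0^{\eps}(\zeta)$ and pairing with $\psi$ gives
\begin{equation*}
q^0(\zeta)-q^{\eps}(\zeta)=c\eps\ip{\psi}{R_0^{\eps}(\zeta)\,x R_0^0(\zeta)\psi}.
\end{equation*}
The self-adjoint resolvent bound $\norm{R_0^{\eps}(\zeta)}\leq 1/|\im\zeta|$ and Cauchy--Schwarz yield
\begin{equation*}
|q^0(\zeta)-q^{\eps}(\zeta)|\leq \frac{|c|\,\eps\,\norm{\psi}\,\norm{xR_0^0(\zeta)\psi}}{|\im\zeta|},
\end{equation*}
and on the compact set $K\Subset\bC^-$ both $|\im\zeta|^{-1}$ and $\norm{xR_0^0(\zeta)\psi}$ (continuous on $\bC\setminus[0,\infty)$) are bounded, giving $q^{\eps}\to q^0$ uniformly on $K$ at rate $O(\eps)$.

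The only point deserving care is the appearance of the unbounded operator $x$ in the identity, but this is harmless because $\psi$ is compactly supported and the free resolvent preserves exponential decay, so $xR_0^0(\zeta)\psi\in L^2$ and all manipulations are rigorous. An alternative would be to combine the spectral representations $F^{\eps}$ of $H_0^{\eps}$ with the asymptotics for $\cG^{\eps}(\zeta,x)$ from Lemma~\ref{asymp-G}, but the resolvent-identity route is lighter and produces an explicit $O(\eps)$ rate.
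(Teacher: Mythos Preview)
Your proof is correct. The paper does not actually supply a proof of this lemma; it simply records it as a well-known fact with a reference to \cite{Y79}. Your second-resolvent-identity argument, exploiting the compact support of $\psi$ to ensure $xR_0^0(\zeta)\psi\in L^2$ and hence $\phi_\zeta\in D(H_0^{\eps})$, is a clean and self-contained way to obtain the uniform convergence on $K$, and it even yields an explicit $O(\eps)$ rate that the paper does not mention.

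Two minor remarks. First, you have written the argument for the rank-one setting of Section~\ref{sect4}, but the same reasoning applies entrywise to the $N\times N$ matrix $Q_0^{\eps}(\zeta)$ in Section~5 (where the convergence is also invoked), since each $\psi_k$ is compactly supported. Second, your closing aside about an alternative via Lemma~\ref{asymp-G} is slightly off target: that lemma is confined to the sector $-\pi/3<\arg\zeta<0$ and concerns the asymptotics of the generalized eigenfunctions, not the resolvent on all of $\bC^-$, so it would not directly yield the present statement. This does not affect your main argument.
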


\begin{theorem}\label{rank-one}
Let $\psi$ satisfy Assumption~\ref{assump31}. Assume furthermore that $\psi$ is real-valued.
Let $V=c\ket{\psi}\bra{\psi}$, $c\in\bR$, $c\neq0$. Let $H^{\eps}=H_0^{\eps}+V$, $\eps\geq0$.
Assume that there exists a sequence $\eps_n\downarrow0$ as $n\to\infty$, such that each  $H^{\eps_n}$ has a resonance $\zeta_n$, $-\pi/3<\arg\zeta_n<0$. Assume $\zeta_n\to\zeta_0$ as $n\to\infty$ and 
$-\pi/3<\arg\zeta_0<0$. Then $\zeta_0$ is not a resonance of $H^0$. 
\end{theorem}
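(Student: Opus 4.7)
The plan is to unwind the resonance condition $\widetilde{G}_{0,+}^{\eps_n}(\zeta_n)=0$—a scalar equation because $\cK=\bC$—combine it with the boundedness supplied by Lemma~\ref{lemma41}, and read off a strong consequence on $\widehat{\psi}$ from the exponentially growing prefactor in Lemma~\ref{asymp-G}; the contradiction is then closed by the Krein eigenvalue criterion for rank-one perturbations together with the self-adjointness of $H^0$. Using Proposition~\ref{prop21}(3) together with $\widetilde{Q}_{0,-}^{\eps}(\zeta)=Q_0^{\eps}(\zeta)=c\ip{\psi}{R_0^{\eps}(\zeta)\psi}$ on $\Omega^-$, and the fact that $\psi$ and $c$ are real, so that Lemma~\ref{lemma1}(3) gives $T^{\eps}(\overline{\zeta};B)^{\ast}T^{\eps}(\zeta;A)=c\,E^{\eps}(\zeta)^2$ with $E^{\eps}(\zeta)=\int_{\bR}\cG^{\eps}(\zeta,x)\psi(x)\,dx$, the resonance condition for $H^{\eps_n}$ becomes the scalar identity
\begin{equation*}
1+c\ip{\psi}{R_0^{\eps_n}(\zeta_n)\psi}+2\pi ic\,E^{\eps_n}(\zeta_n)^2=0.
\end{equation*}
Since $H^0$ is self-adjoint and $\im\zeta_0<0$, one has $G_0^0(\zeta_0)=1+c\ip{\psi}{R_0^0(\zeta_0)\psi}\neq0$; Lemma~\ref{lemma41} on a compact set $K\Subset\bC^-$ containing $\{\zeta_n\}\cup\{\zeta_0\}$ then yields $G_0^{\eps_n}(\zeta_n)\to G_0^0(\zeta_0)$, which is bounded, so the resonance identity forces $E^{\eps_n}(\zeta_n)^2$ to be bounded in $n$.

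Next I extract the key vanishing. Set $\rho_n=-\zeta_n$, so that $\arg\rho_n\in(2\pi/3,\pi)$, $\re(\rho_n^{3/2})<0$ and $\re(\rho_n^{3/2})\to\re(\rho_0^{3/2})<0$; hence $|e^{-2\rho_n^{3/2}/(3\eps_n)}|\to\infty$ exponentially as $n\to\infty$. Choose $M\Subset\{\zeta\in\bC\,|\,-\pi/3<\arg\zeta<0\}$ containing $\zeta_0$ and the tail of $(\zeta_n)$, and $K\Subset\bR$ containing $\supp\psi$. Then Lemma~\ref{asymp-G} combined with dominated convergence yields
\begin{equation*}
E^{\eps_n}(\zeta_n)\,e^{2\rho_n^{3/2}/(3\eps_n)}\longrightarrow\frac{e^{i\pi/4}}{2\sqrt{\pi}\,\zeta_0^{1/4}}\widehat{\psi}(\sqrt{\zeta_0}).
\end{equation*}
If $\widehat{\psi}(\sqrt{\zeta_0})\neq0$, then $|E^{\eps_n}(\zeta_n)|\gtrsim|e^{-2\rho_n^{3/2}/(3\eps_n)}|\to\infty$, contradicting the boundedness established in the previous paragraph. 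Hence $\widehat{\psi}(\sqrt{\zeta_0})=0$.

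Assume for contradiction that $\zeta_0\in r_+^0$. Repeating the unwinding at $\eps=0$, using \eqref{T0} and the reality of $\psi$, one finds $T^0(\overline{\zeta_0};B)^{\ast}T^0(\zeta_0;A)=c\,\widehat{\psi}(\sqrt{\zeta_0})\widehat{\psi}(-\sqrt{\zeta_0})/\sqrt{\zeta_0}$, so $\widetilde{G}_{0,+}^0(\zeta_0)=0$ reads
\begin{equation*}
1+c\ip{\psi}{R_0^0(\zeta_0)\psi}+\frac{2\pi ic\,\widehat{\psi}(\sqrt{\zeta_0})\widehat{\psi}(-\sqrt{\zeta_0})}{\sqrt{\zeta_0}}=0.
\end{equation*}
Inserting $\widehat{\psi}(\sqrt{\zeta_0})=0$ leaves $1+c\ip{\psi}{R_0^0(\zeta_0)\psi}=0$, which is the Krein condition for $\zeta_0$ to be an eigenvalue of the rank-one perturbation $H^0$ with eigenvector $R_0^0(\zeta_0)\psi$; this contradicts the self-adjointness of $H^0$ since $\im\zeta_0<0$. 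The main technical point is the derivation of $\widehat{\psi}(\sqrt{\zeta_0})=0$ in the middle paragraph: one has to confirm that the Airy asymptotic of Lemma~\ref{asymp-G} is uniform enough along the sequence $(\eps_n,\zeta_n)$ for the exponentially growing factor $|e^{-2\rho_n^{3/2}/(3\eps_n)}|$ to genuinely force the vanishing, everything else being a routine manipulation of the formulas already established in Sections~\ref{sect2} and~\ref{sect3}.
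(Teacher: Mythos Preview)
Your argument is correct and runs on the same engine as the paper's: the Airy asymptotic of Lemma~\ref{asymp-G} forces $\widehat{\psi}(\sqrt{\zeta_0})=0$, and this vanishing is then shown to be incompatible with $\zeta_0\in r_+^0$. The renormalization step is organized slightly differently (you first bound $E^{\eps_n}(\zeta_n)^2$ from the convergence $G_0^{\eps_n}(\zeta_n)\to G_0^0(\zeta_0)\neq0$, then derive a contradiction if $\widehat{\psi}(\sqrt{\zeta_0})\neq0$; the paper multiplies \eqref{S-n} by $\exp(4\rho_n^{3/2}/(3\eps_n))$ and passes to the limit directly), but these are equivalent.

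The one genuine difference is the packaging: the paper invokes Theorem~\ref{main} and works with the scalar condition $S^{\eps_n}(\zeta_n)^{-1}=0$, and for the conclusion computes the $2\times2$ matrix $S^0(\zeta_0)^{-1}$ from \eqref{S11}--\eqref{S22} and checks invertibility. You instead stay entirely on the resolvent side, using $\widetilde{G}_{0,+}^{\eps}(\zeta)=0$ (scalar since $\cK=\bC$) together with Proposition~\ref{prop21}(3), and close with the observation that $\widetilde{G}_{0,+}^0(\zeta_0)=0$ would force $G_0^0(\zeta_0)=0$, impossible for $\im\zeta_0<0$. Your route is a bit more economical in that it never needs the scattering-matrix characterization of resonances; the paper's route, on the other hand, generalizes more transparently to the rank~$N$ case treated in Section~5, where the scattering matrix structure is what makes the final step clean.
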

\begin{proof}
Let the assumptions in the Theorem be satisfied. It follows from Theorem~\ref{main} that
$(S^{\eps_n}(\zeta_n))^{-1}=0$ for all $n\geq1$, since the scattering matrix is multiplication by a scalar. Thus we have from \eqref{Sinv} that
\begin{equation}\label{S-n}
1+2\pi i T^{\eps_n}(\zeta_n;A) \widetilde{G}_-^{\eps_n}(\zeta_n)
T^{\eps_n}(\overline{\zeta_n};B)^{\ast}=0\quad\text{for all $n\geq1$}.
\end{equation}
Since $\im\zeta_n<0$ we can write $G_-^{\eps_n}(\zeta_n)$ instead of $\widetilde{G}_-^{\eps_n}(\zeta_n)$. We can then use Lemma~\ref{lemma41} to conclude that $G_-^{\eps_n}(\zeta_n)\to G_-^{0}(\zeta_0)$ as $n\to\infty$.
 
Next we look at the limit of $T^{\eps_n}(\zeta_n;A)$ as $n\to\infty$. Let $K=\supp \psi$.
We can determine a set $M\Subset \{\zeta\in \bC\,|\, -\pi/3<\arg\zeta<0\}$ such that $\zeta_n\in M$ for all $n$. We recall from Section~\ref{sect3} the notation $\rho_n=-\zeta_n$. Since $\zeta_n\in M$, we can determine $\kappa>0$ such that for all $n$
we have $\frac23\pi+\kappa<\arg\rho_n<\pi-\kappa$. This implies that there exists $\delta>0$ such that $\re\rho_n^{\frac32}<-\delta$. Thus we have that 
\begin{equation*}
\exp((4\rho_n^{\frac32})/(3\eps_n))\to 0\quad\text{as}\quad n\to\infty.
\end{equation*}
Multiply by $\exp((4\rho_n^{\frac32})/(3\eps_n))$ on both sides in \eqref{S-n} and take the limit $n\to\infty$. Using \eqref{Tstar}, \eqref{T}, Lemma~\ref{asymp-G}, Lemma~\ref{lemma41}, and dominated convergence we get that
\begin{equation}
\widehat{\psi}(\sqrt{\zeta_0})G_-^0(\zeta_0)\widehat{\psi}(\sqrt{\zeta_0})=0,
\end{equation}
since $\psi$ is assumed to be real. Since $G_-^0(\zeta_0)\neq0$ we conclude that $\widehat{\psi}(\sqrt{\zeta_0})=0$.

We now use the formulas \eqref{S11}--\eqref{S22} and the assumption that $\psi$ is real to get 
\begin{equation}
S^0(\zeta_0)^{-1}=
\begin{bmatrix}
1 & 0\\
a & 1
\end{bmatrix},
\end{equation}
where $a=(S^0(\zeta_0)^{-1})_{21}$. This matrix is obviously invertible. Theorem~\ref{main} implies that $\zeta_0$ is not a resonance of $H^0$.
\end{proof}

\section{A result for rank $N$ perturbations}
We outline an extension to a rank $N$ perturbation in this section. We assume that
\begin{equation}\label{rank-N}
V=\sum_{k=1}^Nc_k\ket{\psi_k}\bra{\psi_k}.
\end{equation}
We introduce the following assumption.
\begin{assumption}\label{assumpN}
Let $V$ be given by \eqref{rank-N}. Assume that $c_k\in\bR\setminus\{0\}$, $k=1,\ldots,N$, and $\psi_k\in L^2_{\rm comp}(\bR)$, $k=1,\ldots,N$, linearly independent real functions.
Assume that each $\psi_k$ is an even function. 
\end{assumption}

The factorization $V=B^{\ast}A$ is given with $\cK=\bC^N$ by the operators
\begin{equation}
Af=\begin{bmatrix}
\ip{\psi_1}{f}\\
\vdots\\
\ip{\psi_N}{f}
\end{bmatrix}
\quad\text{and}\quad
Bf=\begin{bmatrix}
c_1\ip{\psi_1}{f}\\
\vdots\\
c_N\ip{\psi_N}{f}.
\end{bmatrix}.
\end{equation}
The operator $Q^{\eps}_0(\zeta)=BR_0^{\eps}(\zeta)A^{\ast}$ is an $N\times N$ matrix with matrix elements
\begin{equation}
Q^{\eps}_0(\zeta)_{k\ell}=c_k\ip{\psi_k}{R_0^{\eps}(\zeta)\psi_{\ell}},\quad k,\ell=1,\ldots N.
\end{equation}

The operator $T^0(\lambda;A)\colon\bC^N\to\bC^2$ is given by the following matrix
\begin{equation}
T^0(\lambda;A)=\frac{1}{\sqrt{2}\lambda^{1/4}}\begin{bmatrix}
\widehat{\psi}_1(\sqrt{\lambda}) & \widehat{\psi}_2(\sqrt{\lambda}) &
\cdots& \widehat{\psi}_N(\sqrt{\lambda})\\
\widehat{\psi}_1(-\sqrt{\lambda}) & \widehat{\psi}_2(-\sqrt{\lambda}) &
\cdots& \widehat{\psi}_N(-\sqrt{\lambda})
\end{bmatrix}.
\end{equation}
The operator $T^0(\overline{\lambda};B)^{\ast}
\colon\bC^2\to\bC^N$ is given by the following matrix
\begin{equation}
T^0(\overline{\lambda};B)^{\ast}=\frac{1}{\sqrt{2}\lambda^{1/4}}\begin{bmatrix}
c_1\widehat{\overline{\psi}}_1(-\sqrt{\lambda}) &
c_1\widehat{\overline{\psi}}_1(\sqrt{\lambda})\\
c_2\widehat{\overline{\psi}}_2(-\sqrt{\lambda}) &
c_2\widehat{\overline{\psi}}_2(\sqrt{\lambda})\\
\vdots & \vdots\\
c_N\widehat{\overline{\psi}}_N(-\sqrt{\lambda}) &
c_N\widehat{\overline{\psi}}_N(\sqrt{\lambda})\\
\end{bmatrix}.
\end{equation}

We introduce a shorthand notation for these two marices. We write
\begin{equation}
T^0(\lambda;A)=\frac{1}{\sqrt{2}\lambda^{1/4}}\begin{bmatrix}
\sfr_1\\ \sfr_2
\end{bmatrix}
\quad\text{and}\quad
T^0(\overline{\lambda};B)^{\ast}=\frac{1}{\sqrt{2}\lambda^{1/4}}\begin{bmatrix}
\sfs_1 & \sfs_2
\end{bmatrix}.
\end{equation}
This leads to the result that 
\begin{equation}
T^0(\lambda;A)G_-^0(\lambda)T^0(\overline{\lambda};B)^{\ast}
=\frac{1}{2\sqrt{\lambda}}
\begin{bmatrix}
\sfr_1\widetilde{G}_-^0(\lambda)\sfs_1 & \sfr_1\widetilde{G}_-^0(\lambda)\sfs_2\\
\sfr_2\widetilde{G}_-^0(\lambda)\sfs_1 & \sfr_2\widetilde{G}_-^0(\lambda)\sfs_2
\end{bmatrix}.
\end{equation}
As in Section~\ref{sect3} we can continue into the lower half plane, such that for $\im\zeta<0$ we get
 from \eqref{Sinv} the expression
\begin{equation}\label{S0-formula}
S^0(\zeta)^{-1}=\begin{bmatrix}
1 & 0\\ 0 & 1\end{bmatrix}
+
\frac{\pi i}{\sqrt{\lambda}}
\begin{bmatrix}
\sfr_1G_-^0(\zeta)\sfs_1 & \sfr_1G_-^0(\zeta)\sfs_2\\
\sfr_2G_-^0(\zeta)s_1 & \sfr_2G_-^0(\zeta)s_2
\end{bmatrix}.
\end{equation}
Note that we write $G_-^0$ instead of $\widetilde{G}_-^0$ since we are not using a continuation.

Now we look at the expression for $S^{\eps}(\zeta)^{-1}$  in the case $\eps>0$. Define
\begin{equation}
\Phi^{\eps}_k(\zeta)=
\int_{-\infty}^{\infty}\cG^{\eps}(\zeta,x)\psi_k(x)dx,\quad k=1,\ldots,N.
\end{equation}
Define the matrices
\begin{equation}
\sfu=\begin{bmatrix}
\Phi^{\eps}_1(\zeta)& \Phi^{\eps}_2(\zeta) &\cdots & \Phi^{\eps}_N(\zeta)
\end{bmatrix}
\quad\text{and}\quad
\sfv=\begin{bmatrix}
c_1\overline{\Phi^{\eps}_1(\overline{\zeta})}\\
c_2\overline{\Phi^{\eps}_2(\overline{\zeta})}\\
\vdots\\
c_N\overline{\Phi^{\eps}_N(\overline{\zeta})}
\end{bmatrix}.
\end{equation}
Then we have for $\im\zeta<0$
\begin{equation}
S^{\eps}(\zeta)^{-1}=1+2\pi i\, \sfu G_-^{\eps}(\zeta)\sfv.
\end{equation}

We can now state the following result.
\begin{theorem}\label{rankN}
Let $V$ satisfy Assumption~\ref{rank-N}. 
Let $H^{\eps}=H_0^{\eps}+V$, $\eps\geq0$.
Assume that there exists a sequence $\eps_n\downarrow0$ as $n\to\infty$, such that each  $H^{\eps_n}$ has a resonance $\zeta_n$, $-\pi/3<\arg\zeta_n<0$. Assume $\zeta_n\to\zeta_0$ as $n\to\infty$ and 
$-\pi/3<\arg\zeta_0<0$. Then $\zeta_0$ is not a resonance of $H^0$. 
\end{theorem}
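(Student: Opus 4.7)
The plan is to run the same strategy as in the proof of Theorem~\ref{rank-one}, but now the scattering identity produces a single scalar equation that, thanks to the evenness assumption, is strong enough to kill all four entries of $S^0(\zeta_0)^{-1}-I$. First, since $\eps_n>0$ the scattering matrix $S^{\eps_n}(\zeta)$ is scalar-valued, so by Theorem~\ref{main} the resonance assumption is equivalent to
\begin{equation*}
1+2\pi i\,\sfu(\zeta_n)\,G_-^{\eps_n}(\zeta_n)\,\sfv(\zeta_n)=0
\quad\text{for all }n,
\end{equation*}
with $\sfu,\sfv$ as defined at the end of Section~5. The reality of the $\psi_k$ combined with Lemma~\ref{lemma1}(3) gives $\overline{\Phi^{\eps}_k(\overline{\zeta})}=\Phi^{\eps}_k(\zeta)$, so $\sfv=C\,\Phi^{\eps}(\zeta)^T$ and $\sfu=\Phi^{\eps}(\zeta)$, where $C=\operatorname{diag}(c_1,\dots,c_N)$.

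Next I would choose a compact set $M\Subset\{\zeta\in\bC\,|\,-\pi/3<\arg\zeta<0\}$ containing all $\zeta_n$, and write $\rho_n=-\zeta_n$. The geometry of $M$ gives $2\pi/3+\kappa<\arg\rho_n<\pi-\kappa$, hence a uniform $\delta>0$ with $\re\rho_n^{3/2}<-\delta$, so
\begin{equation*}
\exp\!\bigl(4\rho_n^{3/2}/(3\eps_n)\bigr)\to 0.
\end{equation*}
Multiplying the scalar identity by $\exp(4\rho_n^{3/2}/(3\eps_n))$ annihilates the $1$, while the product $e^{2\rho_n^{3/2}/(3\eps_n)}\Phi^{\eps_n}_k(\zeta_n)$ can be handled by pulling the exponential inside \eqref{T} and invoking Lemma~\ref{asymp-G} together with dominated convergence on $K=\bigcup_k\supp\psi_k$. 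The limit is $\tfrac{e^{i\pi/4}}{\sqrt{2}\,\zeta_0^{1/4}}\widehat{\psi_k}(\sqrt{\zeta_0})$. Combining this with Lemma~\ref{lemma41} (which, in finite dimensions, upgrades to norm convergence, so $G_-^{\eps_n}(\zeta_n)\to G_-^0(\zeta_0)$ along $\zeta_n\to\zeta_0$) gives the key identity
\begin{equation*}
\hat{w}^T\,G_-^0(\zeta_0)\,C\,\hat{w}=0,\qquad
\hat{w}=\bigl(\widehat{\psi_k}(\sqrt{\zeta_0})\bigr)_{k=1}^N.
\end{equation*}

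Finally I would feed this identity into \eqref{S0-formula}. Reality of the $\psi_k$ gives $\widehat{\overline{\psi_k}}=\widehat{\psi_k}$, and \emph{evenness} gives $\widehat{\psi_k}(-\sqrt{\zeta_0})=\widehat{\psi_k}(\sqrt{\zeta_0})$. Hence in the notation of Section~5 we have $\sfr_1=\sfr_2=\hat{w}^T$ and $\sfs_1=\sfs_2=C\hat{w}$, so every one of the four entries in the $2\times 2$ matrix in \eqref{S0-formula} equals the single scalar $\hat{w}^T G_-^0(\zeta_0) C\hat{w}$, which we have just shown to vanish. Therefore $S^0(\zeta_0)^{-1}=I$ is invertible, and Theorem~\ref{main} implies $\zeta_0\notin r_+^{0}$, i.e.\ $\zeta_0$ is not a resonance of $H^0$.

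The computations up to the key scalar identity are essentially the rank-one proof written in vector form, and present no real obstacle beyond routine bookkeeping of the $N$ integrals $\Phi^{\eps}_k$. The substantive point is the last step: a single scalar equation seems at first far too little information to control the $2\times 2$ matrix $S^0(\zeta_0)^{-1}$. Evenness is exactly what is needed to collapse the four blocks into one, so this is both the delicate point and the place where the extra hypothesis in Assumption~\ref{assumpN} is actually used. If evenness were dropped one would obtain only $\hat{w}^T G_-^0(\zeta_0) C\hat{w}=0$, which controls only the $(1,2)$ block (after noting $\sfs_2=C\hat{w}$) and leaves the others unconstrained, so the argument would not close.
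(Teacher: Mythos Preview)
Your proof is correct and follows essentially the same route as the paper: use Theorem~\ref{main} to get the scalar identity $S^{\eps_n}(\zeta_n)^{-1}=0$, multiply by $\exp(4\rho_n^{3/2}/(3\eps_n))$ and pass to the limit via Lemma~\ref{asymp-G} and Lemma~\ref{lemma41} to obtain $\sfr_1 G_-^0(\zeta_0)\sfs_2=0$ (your $\hat{w}^T G_-^0(\zeta_0)C\hat{w}=0$), then use reality and evenness to collapse all four blocks of \eqref{S0-formula} so that $S^0(\zeta_0)^{-1}=I$. You have actually spelled out the convergence step in more detail than the paper, which simply refers back to the rank-one argument.
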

\begin{proof}
We sketch the main steps in the proof. We have by assumption and Theorem~\ref{main} that $S^{\eps_n}(\zeta_n)^{-1}=0$ for all $n\geq1$. Repeating the convergence argument in the proof of Theorem~\ref{rank-one} we can conclude that 
\begin{equation}\label{eqr1s2}
\sfr_1 G_-^0(\zeta_0)\sfs_2=0.
\end{equation}
Now since $\psi_k$ is assumed to be even, we also have that $\widehat{\psi}_k$ is even.
Furthermore $ \psi_k$ is assumed to be real. Thus we have
\begin{equation}
\widehat{\psi}_k(\sqrt{\zeta_0})=
\widehat{\psi}_k(-\sqrt{\zeta_0})=
\widehat{\overline{\psi}}_k(\sqrt{\zeta_0})=
\widehat{\overline{\psi}}_k(-\sqrt{\zeta_0}), \quad k=1,2,\ldots,N.
\end{equation}
This result implies $\sfr_1=\sfr_2$ and $\sfs_1=\sfs_2$. From \eqref{S0-formula} and \eqref{eqr1s2} we conclude that
\begin{equation}
S^0(\zeta_0)^{-1}=\begin{bmatrix}1 & 0 \\ 0 & 1\end{bmatrix},
\end{equation}
such that by Theorem~\ref{main} $\zeta_0$ is not a resonance of $H^0$.
\end{proof}

\subsection*{Acknowledgements} KY thanks Ira Herbst for asking him about the instability of resonances under Stark perturbations. KY is supported by JSPS grant in aid for scientific research No. 16K05242. AJ acknowledges support from the Danish Council of Independent Research $|$ Natural Sciences, Grant DFF4181-00042. 


\end{document}